\newtheorem{theorem}{Theorem}
\begin{document}
\title{The Gaussian Two-way Diamond Channel}

\author{\IEEEauthorblockN{Prathyusha V, Srikrishna Bhashyam and Andrew Thangaraj}
\IEEEauthorblockA{Department of Electrical Engineering\\
Indian Institute of Technology Madras, India\\
Email: \{skrishna,andrew\}@ee.iitm.ac.in}
}
\maketitle

\begin{abstract}
  \let\thefootnote\relax\footnotetext{This work was done at the
    Department of Electrical Engineering, IIT Madras,
    Chennai. India. Prathyusha V is now with Cisco Systems, Bangalore,
    India. Srikrishna Bhashyam and Andrew Thangaraj are with the
    Department of Electrical Engineering, IIT Madras, Chennai,
    India. }  We consider two-way relaying in a Gaussian diamond
  channel, where two terminal nodes wish to exchange information using
  two relays. A simple baseline protocol is obtained by time-sharing
  between two one-way protocols. To improve upon the baseline
  performance, we propose two compute-and-forward (CF) protocols --
  Compute-and-forward-Compound multiple access channel (CF-CMAC) and
  Compute-and-forward-Broadcast (CF-BC). These protocols mix the two
  flows through the two relays and achieve rates better than the
  simple time-sharing protocol. We derive an outer bound to the
  capacity region that is satisfied by any relaying protocol, and
  observe that the proposed protocols provide rates close to the outer
  bound in certain channel conditions. Both the CF-CMAC and CF-BC
  protocols use nested lattice codes in the compute phases. In the
  CF-CMAC protocol, both relays simultaneously forward to the
  destinations over a Compound Multiple Access Channel (CMAC). In the
  simpler CF-BC protocol's forward phase, one relay is selected at a
  time for Broadcast Channel (BC) transmission depending on the
  rate-pair to be achieved. We also consider the diamond channel with
  direct source-destination link and the diamond channel with
  interfering relays. Outer bounds and achievable rate regions are
  compared for these two channels as well. Mixing of flows using the
  CF-CMAC protocol is shown to be good for symmetric two-way rates.
\end{abstract}

\IEEEpeerreviewmaketitle

\section{Introduction}
The half-duplex Gaussian diamond channel where a source communicates
with a destination through two non-interfering relays is a problem of interest in information theory \cite{SchienDiamond}. In \cite{BagMotKha10}, multihopping
decode-and-forward (MDF) protocols were proposed to achieve rates
within a constant gap of a capacity outer bound.

Two-way relaying where two nodes without a direct link communicate
with each other through a single relay has been studied in
\cite{PopYom07,kim2008performance,kim2011achievable,NamChuLee10,WilNar10,NazerGastpar}. In
\cite{PopYom07,kim2008performance,kim2011achievable}, the achievable
rate regions of various two-way relaying protocols are compared. In
\cite{NamChuLee10,WilNar10,NazerGastpar}, coding strategies based on nested lattice
codes and compute-and-forward are proposed. One interesting aspect of
two-way relaying is that there are {\em two} data flows and {\em
  mixing of the two flows} at the relay can be exploited to improve
the rates in both directions. This is achieved by physical layer
network coding or the compute-and-forward strategy. Two-way relaying
where the two communicating nodes also have a direct link has been
studied in
\cite{kim2011achievable,GonYueWan11,tian2012asymmetric,allerton12}.
While protocols and achievable rate regions are proposed in
\cite{kim2011achievable,GonYueWan11,tian2012asymmetric}, an outer
bound to the capacity region for any protocol is derived in
\cite{allerton12}.

In this paper, we consider two-way communication over the diamond
channel, which appears to have not received attention in existing literature. In this network shown in Fig. \ref{diamond}, nodes $A$ and
$B$ communicate with each other through two non-interfering relays
$R_1$ and $R_2$.
\begin{figure*}[t]
\begin{center}
\begin{minipage}{1.8in}
\label{diamond}
\tikzstyle{cnode}=[circle,draw]
\begin{tikzpicture}
\node [cnode,left] (A) at (0,0) {A};
\node [cnode,right] (B) at (2.4,0) {B};
\node [cnode,above] (R1) at (1.2,1) {$R_1$};
\node [cnode,below] (R2) at (1.2,-1) {$R_2$};
\draw[-] (A) to node[left]{$\gamma_{a1}$} (R1);
\draw[-] (A) to node[left]{$\gamma_{a2}$} (R2);
\draw[-] (R2) to node[right]{$\gamma_{b2}$} (B);
\draw[-] (R1) to node[right]{$\gamma_{b1}$} (B);
\end{tikzpicture}
\caption{Gaussian Diamond Channel}
\end{minipage}
\begin{minipage}{4.0in}
\label{states}
\includegraphics[width=4.0in]{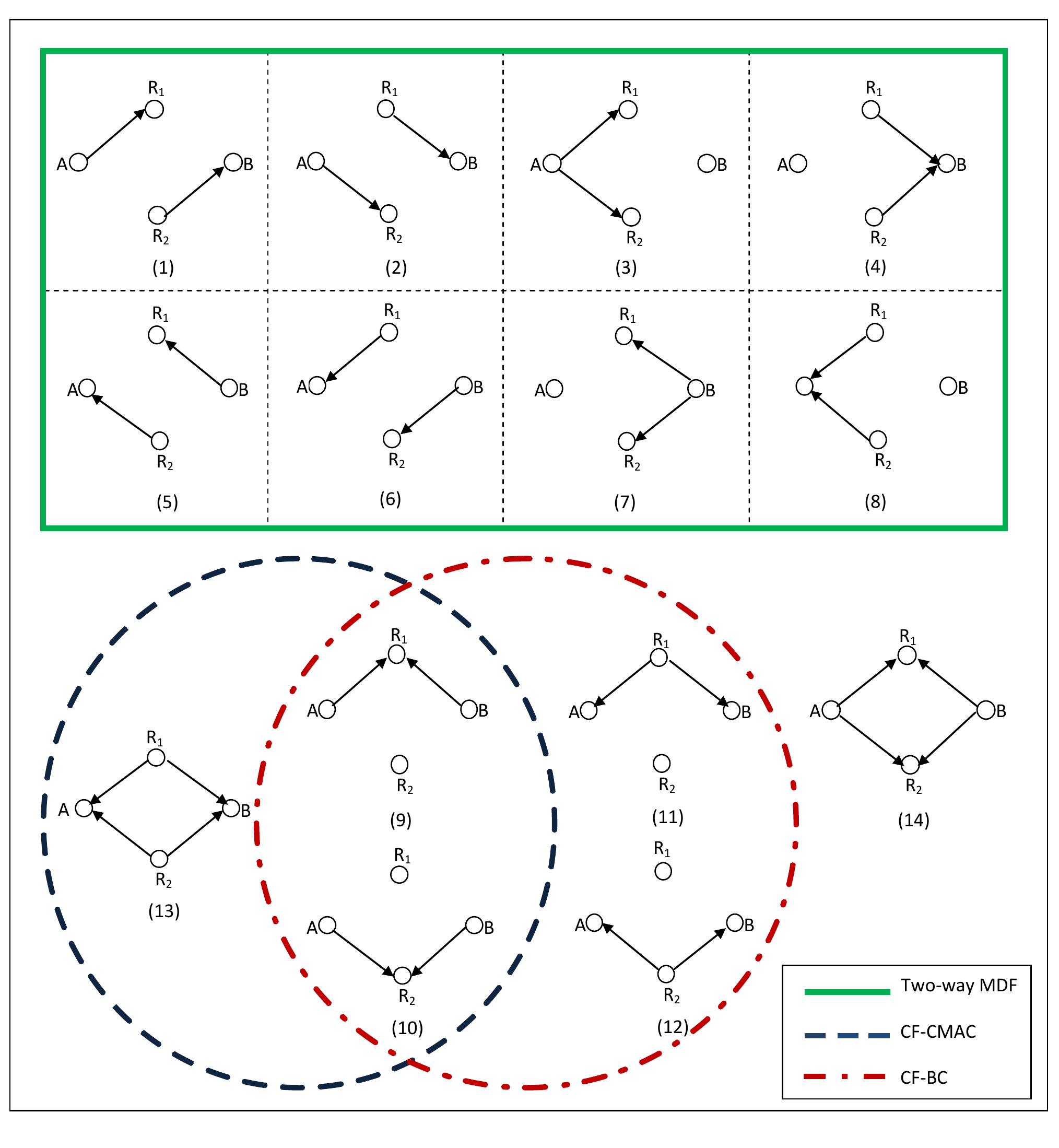}
\caption{States of a Diamond Channel}
\end{minipage}
\end{center}
\end{figure*}
We are interested in the capacity region consisting
of all possible rate pairs $(R_a, R_b)$, where $R_a$ is the rate of
communication from $A$ to $B$ and $R_b$ is the rate of communication
from $B$ to $A$. First, we derive an outer bound to the capacity
region that is valid for any protocol. This is obtained by extending
the approach in \cite{allerton12} to the diamond channel. Then, we
propose relaying protocols for the two-way diamond channel and
determine their achievable rate regions. 

A simple baseline protocol is
a two-way protocol that does not mix the two flows between the nodes $A$ and $B$, and time-shares
between two one-way MDF protocols \cite{BagMotKha10} in either direction. We call this
the two-way MDF protocol. Two compute-and-forward protocols --
Compute-and-forward-Compound multiple access channel (CF-CMAC) and
Compute-and-forward-Broadcast (CF-BC)-- that can achieve rate pairs
that the two-way MDF protocol cannot achieve are proposed. In the
CF-BC protocol, only one of the relays is used at any given time based
on the required rate-pair. This allows the use of compute-and-forward
schemes known for the one-relay case. In the CF-CMAC protocol, a
nested lattice code is used in the transmission to the relays and the
two relays forward to the two destinations simultaneously. The use of
the compound MAC in the forwarding phase of CF-CMAC protocol instead
of two separate broadcast phases from the two relays improves the
achievable rate region compared to the CF-BC protocol. Finally, we
consider the possibility of time-sharing between the CF-CMAC and
two-way MDF protocols to improve the achievable rate region. Numerical
results and comparisons for different channel conditions are shown to
illustrate the gains from the proposed protocols. 

We also consider the diamond channel with direct source-destination
link and the diamond channel with interfering relays. Outer bounds and
achievable rate regions are compared. We develop a 2-relay Cooperative
Multiple Access Broadcast Channel (2-relay CoMABC) protocol for the
diamond channel with direct link and compared the achievable rate
region with the CF-BC and CF-CMAC protocols and the outer bound. We
also compare the 2-way Alternating Relay DF (2-way AR-DF) protocol
with the CF-CMAC protocol and the outer bound for the diamond channel
with interfering relays.

\section{Two-way Gaussian Diamond Channel}
The Gaussian diamond channel is shown in Fig. \ref{diamond}
\cite{SchienDiamond,BagMotKha10}. Two-way communication between two nodes, $A$ and
$B$, is assisted by two relays, $R_1$ and $R_2$. No direct link is
assumed to be present between the two nodes $A$ and $B$ or between the
two relays. We consider half-duplex nodes, i.e, at any particular
time, a node can either be in transmit state or in receive state. The
links are Gaussian with receiver noise variance of $N$ and
reciprocal. Let $P$ be the transmit power available at each node. The
SNRs of the different links are denoted as
$\gamma_{a1}=\frac{{h}_{a1}^2P}{N}$,$\gamma_{a2}=\frac{{h}_{a2}^2P}{N}$,$\gamma_{b1}=\frac{{h}_{b1}^2P}{N}$
and $\gamma_{b2}=\frac{{h}_{b2}^2 P}{N}$, where
$h_{a1},h_{a2},h_{b1},h_{b2}$ are the gains of the links $A\leftrightarrow{R_1},
A\leftrightarrow{R_2}, B\leftrightarrow{R_1}, B\leftrightarrow{R_2}$, respectively. We use
$C(\gamma)={\log_2}(1+\gamma)$ to represent the capacity of a
circularly symmetric complex Gaussian channel with SNR $\gamma$.

The diamond relay network has $2^4 = 16$ possible states, since each of
the four half-duplex nodes can either be in transmit or receive
state. Of these 16 states, we can ignore the 2 states in which all
nodes are in transmit or all are in receive states, as they do not
help in information flow. The 14 useful states are shown in
Fig. \ref{states}. As an illustration, when state 14 is
used, nodes $A$ and $B$ transmit symbols denoted $X_A$ and $X_B$, and
this is received as $Y_{R_1}=h_{a1}X_A+h_{b1}X_B+Z_{R_1}$ at $R_1$ and
$Y_{R_2}=h_{a2}X_A+h_{b2}X_B+Z_{R_2}$ at $R_2$ with $Z$ denoting additive
noise. Specifying a relaying protocol involves specifying
the sequence of states and the coding/decoding schemes for each of
these states. In one-way communication over the diamond channel in
\cite{BagMotKha10}, only states 1 to 4 need to be considered for
communication from $A$ to $B$. In two-way communication, all 14 states
need to be considered in general. 

As mentioned before, $R_a$ is the rate of communication from $A$ to
$B$ and $R_b$ is the rate of communication from $B$ to $A$. The
capacity region for two-way communication consists of all $(R_a, R_b)$
pairs for which reliable two-way communication is possible.

\section{Capacity Region Outer Bound}
\label{outer-diamond}
In this section, we derive an outer bound to the capacity region of
the two-way Gaussian diamond channel, which holds for any two-way
relaying protocol.  The bound is derived using the half-duplex cutset
bound \cite{khojastepour2003capacity} and the boundary of this region
can be computed by solving linear programs.

In general, we should consider all 14 states for deriving the outer
bound. However, if a state has all cut capacities lesser than or equal
to the corresponding cut capacities for another state, then only the
state with higher cut capacities needs to be considered. In this
network, 8 of the states (3,4 and 7-12) are dominated by either State
13 or 14. Therefore, we consider only the 6 states: 1, 2, 5,
6, 13, 14 with the fraction of time the network in state $i$ being
denoted $\mu_i$.

\begin{theorem}
\label{thm1}
Any achievable rate pair $(R_a, R_b)$ for two-way communication over the diamond channel must satisfy the following inequalities for some $\{\mu_i\}$: 
{\allowdisplaybreaks\begin{align}
R_a &\leq {\mu_{14}}C(\gamma_{a1} + \gamma_{a2}) + {\mu_1}C(\gamma_{a1}) +  {\mu_2}C(\gamma_{a2}), \nonumber \\
R_a &\leq {\mu_{14}}C(\gamma_{a2}) + {\mu_{13}}C(\gamma_{b1}) +  {\mu_2}(C(\gamma_{a2})+C(\gamma_{b1})), \nonumber \\
R_a &\leq {\mu_{14}}C(\gamma_{a1}) + {\mu_{13}}C(\gamma_{b2}) +  {\mu_1}(C(\gamma_{a1})+C(\gamma_{b2})), \nonumber \\
R_a &\leq {\mu_{13}}C((\sqrt{\gamma_{b1}}+\sqrt{\gamma_{b2}})^2) + {\mu_1}C(\gamma_{b2}) +  {\mu_2}C(\gamma_{b1}), \nonumber \\
R_b &\leq {\mu_{14}}C(\gamma_{b1} + \gamma_{b2}) + {\mu_5}C(\gamma_{b1}) +  {\mu_6}C(\gamma_{b2}), \nonumber \\
R_b &\leq {\mu_{14}}C(\gamma_{b2}) + {\mu_{13}}C(\gamma_{a1}) +  {\mu_6}(C(\gamma_{a1})+C(\gamma_{b2})), \nonumber \\
R_b &\leq {\mu_{14}}C(\gamma_{b1}) + {\mu_{13}}C(\gamma_{a2}) +  {\mu_5}(C(\gamma_{b1})+C(\gamma_{a2})), \nonumber \\
R_b &\leq {\mu_{13}}C((\sqrt{\gamma_{a1}}+\sqrt{\gamma_{a2}})^2) + {\mu_5}C(\gamma_{a2}) +  {\mu_6}C(\gamma_{a1}), \nonumber \\
\sum_{i} & \mu_i = 1, \mu_i\ge0.
\label{outerbound}
\end{align}}
\begin{proof}
For any general network with $M$ states and a fraction of time $\mu_i$ in state $i$, any achievable rate $R$ of information flow is bounded as
\begin{equation}
R \leq {\min_S} {\sum_{i=1}^M}{\mu_i I(X^S;Y^{S^c}|X^{S^c},i)},
\end{equation}
where $R$ is the rate from source to destination node with the source
in a subset of nodes $S$ and the destination in
$S^c$. The set $S$ defines a cut that separates source and
destination. This bound is applied to the two-way diamond channel using the
cuts
$\lbrace{a}\rbrace$,$\lbrace{a,r_1}\rbrace$,$\lbrace{a,r_2}\rbrace$,$\lbrace{a,r_1,r_2}\rbrace$
for bounding $R_a$ and the cuts
$\lbrace{b}\rbrace$,$\lbrace{b,r_1}\rbrace$,$\lbrace{b,r_2}\rbrace$,$\lbrace{b,r_1,r_2}\rbrace$
for bounding $R_b$. Using these cuts, we obtain the following.
\begin{equation*}
R_a \leq  \min{\lbrace R_{a1},R_{a2},R_{a3},R_{a4} \rbrace}, 
\end{equation*}
\begin{equation*}
\begin{split}
\mbox{where~~} R_{a1} = &{\mu_{14}}I(X_a;Y_1,Y_2|X_b)+{\mu_1}I(X_a;Y_1|X_2)\\
&+{\mu_2}I(X_a;Y_2|X_1), \\
R_{a2} = &{\mu_{14}}I(X_a;Y_2|X_b)+{\mu_{13}}I(X_1;Y_b|X_2)\\
&+{\mu_2}(I(X_a;Y_2)+I(X_1;Y_b)),\\
R_{a3} = &{\mu_{14}}I(X_a;Y_1|X_b)+{\mu_{13}}I(X_2;Y_b|X_1)\\
&+{\mu_1}(I(X_a;Y_1)+I(X_2;Y_b)),\\
R_{a4} = &{\mu_{13}}I(X_1,X_2;Y_b)+{\mu_1}I(X_2;Y_b)\\
&+{\mu_2}I(X_1;Y_b), 
\end{split}
\end{equation*} 
\begin{equation*}
\mbox{and~~} R_b \leq  \min{\lbrace R_{b1},R_{b2},R_{b3},R_{b4} \rbrace},
\end{equation*}
{\allowdisplaybreaks\begin{equation*}
\begin{split}
\mbox{where~~} R_{b1} = &{\mu_{14}}I(X_b;Y_1,Y_2|X_a)+{\mu_5}I(X_b;Y_1|X_2)\\
&+{\mu_6}I(X_b;Y_2|X_1), \\
R_{b2} = &{\mu_{14}}I(X_b;Y_2|X_a)+{\mu_{13}}I(X_1;Y_a|X_2)\\
&+{\mu_6}(I(X_b;Y_2)+I(X_1;Y_a)),\\
R_{b3} = &{\mu_{14}}I(X_b;Y_1|X_a)+{\mu_{13}}I(X_2;Y_a|X_1)\\
&+{\mu_5}(I(X_b;Y_1)+I(X_2;Y_b)),\\
R_{b4} = &{\mu_{13}}I(X_1,X_2;Y_a)+{\mu_5}I(X_2;Y_ba)\\
&+{\mu_6}I(X_1;Y_a). 
\end{split}
\end{equation*}} 
The mutual information terms in the above equations can be further bounded resulting in (\ref{outerbound}). For example, $I(X_b;Y_1,Y_2|X_a) \le C(\gamma_{b1} + \gamma_{b2})$ and $I(X_1,X_2;Y_b) \le C((\sqrt{\gamma_{b1}}+\sqrt{\gamma_{b2}})^2)$ \cite{khojastepour2003capacity,allerton12}. 
\end{proof}
\end{theorem}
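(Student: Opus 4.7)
The plan is to apply the half-duplex cutset bound of Khojastepour \cite{khojastepour2003capacity} in the spirit of the two-way relay bound in \cite{allerton12}. For each direction of information flow I would specialize the generic inequality
\[
R \le \min_{S} \sum_{i=1}^{M} \mu_i \, I\!\left(X^S; Y^{S^c}\,\big|\,X^{S^c}, i\right)
\]
to the four source--destination cuts that separate the source from the destination. For $R_a$ these are $S \in \{\{a\}, \{a,r_1\}, \{a,r_2\}, \{a,r_1,r_2\}\}$, giving four bounds; by the natural symmetry of the network, $R_b$ yields four analogous bounds obtained by swapping the roles of $a$ and $b$.

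Before summing over states, I would observe that a given half-duplex state contributes a nonzero mutual information across a cut only when some node of $S$ is transmitting while some node of $S^c$ is receiving. Enumerating the 14 useful states under this criterion immediately kills many terms. At this stage I would verify the dominance claim mentioned just before the theorem: for each of the eight cuts, each of the states 3, 4, 7--12 produces a cut contribution that is pointwise upper bounded by the corresponding contribution of either state 13 (both relays transmit, both end nodes receive) or state 14 (both end nodes transmit, both relays receive). Hence in an optimal time-sharing the weights on states 3, 4, 7--12 can be absorbed into $\mu_{13}$ and $\mu_{14}$, leaving only $\mu_1, \mu_2, \mu_5, \mu_6, \mu_{13}, \mu_{14}$ in the final expressions.

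Once the surviving combinations are pinned down, the remaining step is routine: upper bound each mutual information by its Gaussian analogue under independent inputs, except at joint transmissions where coherent combining is allowed. Concretely, I would invoke $I(X_a;Y_1\mid X_b)\le C(\gamma_{a1})$, the MAC-into-relays bound $I(X_a;Y_1,Y_2\mid X_b)\le C(\gamma_{a1}+\gamma_{a2})$, and the coherent broadcast-out-of-relays bound $I(X_1,X_2;Y_b)\le C\!\left((\sqrt{\gamma_{b1}}+\sqrt{\gamma_{b2}})^2\right)$, the latter two being the standard Gaussian cutset evaluations from \cite{khojastepour2003capacity,allerton12}. Substituting these into the four cut expressions per direction, together with $\sum_i \mu_i = 1$ and $\mu_i\ge 0$, reproduces the eight inequalities of (\ref{outerbound}).

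The main obstacle I expect is the dominance bookkeeping: carefully checking, cut by cut, that none of the states 3, 4, 7--12 can strictly beat the contributions of states 13 or 14, so that restricting attention to the six listed states is truly without loss of generality in the outer bound. Everything downstream of that step is standard Gaussian capacity substitution and should go through mechanically.
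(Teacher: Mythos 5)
Your proposal matches the paper's proof essentially step for step: the same half-duplex cutset bound, the same four cuts per direction ($\{a\}$, $\{a,r_1\}$, $\{a,r_2\}$, $\{a,r_1,r_2\}$ and their $b$-counterparts), the same reduction to the six states $1,2,5,6,13,14$ via the dominance of states $3,4,7$--$12$, and the same Gaussian evaluations such as $I(X_a;Y_1,Y_2\mid X_b)\le C(\gamma_{a1}+\gamma_{a2})$ and $I(X_1,X_2;Y_b)\le C\bigl((\sqrt{\gamma_{b1}}+\sqrt{\gamma_{b2}})^2\bigr)$. The only difference is that you propose to verify the state-dominance claim explicitly inside the proof, whereas the paper asserts it in the preamble to the theorem; this is a matter of presentation, not of substance.
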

The boundary of the capacity region in Theorem 1 can be computed by
solving the following linear program for each $k \ge 0$:
$\displaystyle{\max_{R_a,\{\mu_i\}} R_a,}$ subject to $R_a = kR_b$ and
the constraints in (\ref{outerbound}).
 
\section{Proposed Protocols and Achievable Rates}
\subsection{Two-way MDF protocol}
In \cite{BagMotKha10}, a multihopping-decode-and-forward (MDF)
protocol is proposed for one way communication in the diamond
channel. A simple two-way protocol can use the same MDF protocol for
both flows ($A$ to $B$ and $B$ to $A$) in a time-sharing manner. Thus,
states 1-4 in Fig. \ref{states} will be used for communication from
$A$ to $B$, and states 5-8 will be used for communication from $B$ to
$A$. We call this protocol that uses states 1-8 and a
decode-and-forward (DF) strategy in each state as the two-way MDF
protocol.

The maximum achievable rate for the one-way MDF protocol can be
computed as in \cite{BagMotKha10}. Suppose this rate for communication
from $A$ to $B$ is $R_{a-mdf}$ and for communication from $B$ to $A$
is $R_{b-mdf}$, then the achievable rate region for the two-way MDF
protocol is the triangular region enclosed by the three straight
lines: (1) $R_a = 0$, (2) $R_b = 0$, and (3) the line joining $(0,
R_{b-mdf})$ and $(R_{a-mdf}, 0)$.

\subsection{CF-CMAC protocol}
In this protocol, states 9, 10, and 13 are used. States 9 and 10 are
multiple access channels (MACs), in which both the nodes $A$ and $B$
transmit to one of the relays. This protocol employs a compute and forward
strategy at the relays, making use of doubly nested lattice codes following
\cite{NamChuLee10},\cite{interfere} to decode
the sum of the messages received from $A$ and $B$, instead of decoding
the individual messages. This sum is forwarded to the two nodes in
state 13. State 13 is a Compound MAC (CMAC), in
which both the relays simultaneously transmit to both nodes $A$ and
$B$. In a CMAC, each transmitter sends one message that should be decoded
at both the receivers. 

\subsubsection*{Doubly Nested Lattice codes}
A nested lattice code $\mathcal{L}$ is the set of all points of a fine
lattice $\Lambda$ that are within the fundamental Voronoi region
$\nu_1$ of a coarse lattice $\Lambda_1$, i.e., $\mathcal{L} =
\lbrace\Lambda \cap {\nu_1}\rbrace$. In \cite{NamChuLee10}, it was shown
that, for every $P_1 \geq P_2 \geq 0$, there exist a sequence of
$n$-dimensional lattices ${\Lambda_1}^n \subseteq {\Lambda_2}^n
\subseteq {\Lambda}^n$, as $n\to \infty$, with second moments
$\sigma^2({\Lambda_1}^n)\to P_1$
and $\sigma^2({\Lambda_2}^n)\to P_2$ such that the rate of the
nested lattice code $\mathcal{L}_2 = \lbrace{\Lambda^n} \cap
\nu_2\rbrace$ associated with the lattice partition
${\Lambda^n}/{{\Lambda_2}^n}$, as $n \to \infty$, approaches
\begin{equation}\label{R2}
R(\mathcal{L}_2) = \frac{1}{n}\log{|\mathcal{L}_2|} = \frac{1}{n}\log{\frac{\mbox{Vol}(\nu_2)}{\mbox{Vol}(\nu)}},
\end{equation}  
while the coding rate of the nested lattice code $\mathcal{L}_1 = \lbrace{\Lambda^n} \cap \nu_1\rbrace$ associated with the lattice partition ${\Lambda^n}/{{\Lambda_1}^n}$, as $n \to \infty$, approaches
\begin{equation}\label{R1}
\begin{split}
R(\mathcal{L}_1) =& \frac{1}{n}\log{|\mathcal{L}_1|} = \frac{1}{n}\log{\frac{\mbox{Vol}(\nu_1)}{\mbox{Vol}(\nu)}} = R(\mathcal{L}_2) + \frac{1}{2}\log{\frac{P_1}{P_2}}.
\end{split}
\end{equation} 
Using doubly nested lattice codes, an achievable rate region has been derived in \cite{NamChuLee10} for
the case of the Gaussian two-way relay channel. We use the same
design for the states 9 and 10 of the CF-CMAC protocol. We provide a
brief description for completeness, and refer to \cite{NamChuLee10} for
more details.

\subsubsection*{State 9} 
Consider an $n$-dimensional doubly nested lattice code $\mathcal{L}_1$, $\mathcal{L}_2$
with second moments $\sigma^2({\Lambda_1}^n)\to\gamma_{a1}$ and
$\sigma^2({\Lambda_2}^n)\to\gamma_{b1}$ assuming $\gamma_{a1}\ge \gamma_{b1}$. Node $A$ chooses a message $w_1
\in \mathcal{L}_1$, while $B$ chooses a message
$w_2 \in \mathcal{L}_2$. The nodes $A$ and $B$ transmit
\begin{equation}
X_1 = \frac{1}{h_{a1}}[(w_1+u_1)\mbox{mod } \Lambda_1], \;
X_2 = \frac{1}{h_{b1}}[(w_2+u_2)\mbox{mod } \Lambda_2],
\end{equation}
respectively, where $u_i$ are random dithers uniformly distributed in
the Voronoi regions of $\Lambda_1^n$ and $\Lambda_2^n$. Note that the
transmit signals $X_1$ and $X_2$ are pre-divided by the respective
channel gains to ensure that a noisy version of the sum $w_1+w_2$ is received at the relay. The relay $R_1$ receives
\begin{equation}
Y_{R_1} = h_{a1}X_1 + h_{b1}X_2 + Z_{R_1},
\end{equation}
multiplies $Y_{R_1}$ by the MMSE coefficient
$\displaystyle{\alpha = \frac{\gamma_{a1}+\gamma_{b1}}{1+\gamma_{a1}+\gamma_{b1}},}$
and subtracts the dithers to obtain
\begin{equation*}
{\hat{Y}_{R_1}} = (\alpha{Y_{R_1}} - u_1 - u_2) \mbox{mod } {\Lambda_1} 
=({T_1} + \hat{Z}_{R_1}) \mbox{mod } \Lambda_1,
\end{equation*}
where 
\begin{gather*}
T_1 
	=(w_1 - Q_{\Lambda_1}(w_1 + u_1)+ w_2 - Q_{\Lambda_2}(w_2 + u_2)) \mbox{mod } \Lambda_1, \\
\hat{Z}_{R_1} = -(1-\alpha){h_{a1}{X_1}} - (1-\alpha){h_{b1}{X_2}} + \alpha{Z_{R_1}}.
\end{gather*}
Here, $Q_\Lambda(\cdot)$ denotes quantization to the nearest lattice
point and $\hat{Z}_{R_1}$ is the effective noise at the relay with variance 
$\displaystyle{\mbox{Var}(\hat{Z}_{R_1}) = \frac{({\gamma_{a1}+\gamma_{b1}}){N}}{1+\gamma_{a1}+\gamma_{b1}},}$
which results in higher effective SNR at the relay.

The relay $R_1$ attempts to decode $T_1$ by quantizing $\hat{Y}_{R_1}$
to the closest point in the fine lattice, and an error occurs if
$\hat{Z}_{R_1}$ is outside the Voronoi region. The probability of
error vanishes \cite{interfere} as $n\to\infty$ if the second moment satisfies
\begin{equation}\label{condition}
{\sigma}^2(\Lambda^n) > \mbox{Var}({\hat{Z}_{R_1}}).
\end{equation}
From \eqref{R2}, \eqref{R1}, \eqref{condition}, the rate constraints for State 9 can be written as follows.
\begin{equation}\label{state1init}
\begin{split}
&F_{a{r_1}}^{9} \leq {\mu_9}\left[{\frac{1}{2}\log\left({\frac{\gamma_{a1}}{\gamma_{a1}+\gamma_{b1}}}+{\gamma_{a1}}\right)}\right]^{+} \\
&F_{b{r_1}}^{9} \leq {\mu_9}\left[{\frac{1}{2}\log\left({\frac{\gamma_{b1}}{\gamma_{a1}+\gamma_{b1}}}+{\gamma_{b1}}\right)}\right]^{+},
\end{split}
\end{equation}
where $\mu_9$ is the fraction of time for state 9, and
$F_{tr}^{s}$ denotes the amount of information flow from $t$ to
$r$ in state $s$.

\subsubsection*{State 10}
State 10 follows a similar lattice coding scheme as State 9, possibly
using a different doubly nested lattice code $\mathcal{L}_3$,
$\mathcal{L}_4$ with second moments $\sigma^2({\Lambda_3}^n)\to\gamma_{a2}$ and
$\sigma^2({\Lambda_4}^n)\to\gamma_{b2}$ assuming $\gamma_{a2}\ge \gamma_{b2}$. 
Nodes $A$ and $B$ choose messages $w_3$, $w_4$ from $\mathcal{L}_3$, $\mathcal{L}_4$,
and the relay $R_2$ decodes 
\begin{equation}
T_2=(w_3 - Q_{\Lambda_3}(w_3 + u_3)+ w_4 - Q_{\Lambda_4}(w_4 + u_4)) \mbox{mod } \Lambda_3,
\end{equation}
where $u_i$ are random dithers. Proceeding as for state 9, the rate constraints for state 10 can be written as
\begin{equation}\label{state2init}
\begin{split}
&F_{a{r_2}}^{10} \leq {\mu_{10}}\left[{\frac{1}{2}\log\left({\frac{\gamma_{a2}}{\gamma_{a2}+\gamma_{b2}}}+{\gamma_{a2}}\right)}\right]^{+} \\
&F_{b{r_2}}^{10} \leq {\mu_{10}}\left[{\frac{1}{2}\log\left({\frac{\gamma_{b2}}{\gamma_{a2}+\gamma_{b2}}}+{\gamma_{b2}}\right)}\right]^{+}.
\end{split}
\end{equation}

\subsubsection*{State 13} 
In state 13, both the relays simultaneously transmit the decoded sum
of messages to both nodes $A$ and $B$. Both $A$ and $B$ are required
to decode the message sent from each relay. Relay $R_1$ generates a
Gaussian codebook $C_{R_1}$ consisting of $|\mathcal{L}_1|$ $n$-length
sequences, with each element being {\it i.i.d} and having a Gaussian distribution
$\mathcal{N}(0,P)$. We assume that the relays make no error in
decoding $T_1$ and $T_2$ in the first two states, which will be true
if the rate constraints described above are satisfied. Since $T_1$ is
uniformly distributed over $\mathcal{L}_1$, for every $T_1 = t_1 \in
\mathcal{L}_1$, the relay chooses to transmit a particular
$X_{R_1}(t_1) \in C_{R_1}$. Similarly, $R_2$ generates a random
Gaussian codebook $C_{R_2}$ consisting of $|\mathcal{L}_3|$ $n$-length
sequences, with each element being {\it i.i.d}
$\mathcal{N}(0,P)$ and for every $T_2 = t_2 \in \mathcal{L}_3$, the
relay broadcasts a particular $X_{R_2}(t_2) \in C_{R_2}$. This results
in MACs at nodes $A$ and $B$. Node $A$ receives
$Y_A = h_{a1}{X_{R_1}} + h_{a2}{X_{R_2}} + Z_A$
from which it decodes $X_{R_1}$ and $X_{R_2}$. 

Since there is a one-to-one correspondence between the elements of
$\mathcal{L}_1$ and $C_{R_1}$, $A$ can obtain $\hat{T}_1$ from
${X_{R_1}}$. Also, $A$ can obtain $\hat{T}_2$ from ${X_{R_2}}$ because of
the one-to-one correspondence between the elements of $\mathcal{L}_3$
and $C_{R_2}$. Similarly, $B$ can obtain $\tilde{T}_1$ and $\tilde{T}_2$
from the received vector
$Y_B = h_{b1}{X_{R_1}} + h_{b2}{X_{R_2}} + Z_B.$

Since $w_1$, $w_3$ are messages transmitted by $A$ to the relays in
states 9 and 10, node $A$ has {\it a priori} knowledge of them. This can be used as side-information for decoding $w_2$
and $w_4$. Using the knowledge of $w_1$, $A$ can decode
$w_2$ from $\hat{T}_1$ as
$\hat{w}_2 = [\hat{T}_1 - w_1] \mbox{mod } \Lambda_2.$
Using $w_3$, $A$ can decode $w_4$ from $\hat{T}_2$ as 
$\hat{w}_4 = [\hat{T}_2 - w_3] \mbox{mod } \Lambda_4.$
Similarly, using its {\it a priori} knowledge of $w_2$, $w_4$ and dithers, node $B$ can decode $w_1$ and $w_3$ from $T_1$ and $T_2$ as
\begin{equation}
\begin{split}
\hat{w}_1 = [\tilde{T}_1 - w_2 + Q_{\Lambda_2}(w_2+u_2)] \mbox{mod } \Lambda_1 \\
\hat{w}_3 = [\tilde{T}_2 - w_4 + Q_{\Lambda_4}(w_4+u_4)] \mbox{mod } \Lambda_3.
\end{split}
\end{equation}
 
The rate constraints for correct CMAC decoding in state 13 are \cite{ahlswede}:
\begin{equation}\label{state3_1}
F_{{r_1}a}^{13} \leq {\mu_{13}}C(\gamma_{a1}), \; F_{{r_1}b}^{13} \leq {\mu_{13}}C(\gamma_{b1}), 
\end{equation}
\begin{equation}\label{state3_2}
F_{{r_2}a}^{13} \leq {\mu_{13}}C(\gamma_{a2}), \; F_{{r_2}b}^{13} \leq {\mu_{13}}C(\gamma_{b2}), 
\end{equation}
 \begin{equation}\label{state3_3}
\begin{split}
&F_{{r_1}a}^{13} + F_{{r_2}a}^{13} \leq {\mu_{13}}C(\gamma_{a1} + \gamma_{a2}), \\
&F_{{r_1}b}^{13} + F_{{r_2}b}^{13} \leq {\mu_{13}}C(\gamma_{b1} +
\gamma_{b2}), \\
\end{split}
\end{equation}
where the last 2 constraints are the MAC constraints at $A$ and $B$ on
the sum rates. Thus, this rate region is an intersection of the 2 MAC
regions defined by the 2 receivers $A$ and $B$.

Equating the information received at a relay from one node to the information forwarded by it to the other node, gives us the following four flow constraints.
\begin{equation}\label{flow1}
\begin{split}
F_{a{r_1}}^{9} = F_{{r_1}b}^{13},\, F_{b{r_1}}^{9} = F_{{r_1}a}^{13},\, 
F_{a{r_2}}^{10} = F_{{r_2}b}^{13},\, F_{b{r_2}}^{10} = F_{{r_2}a}^{13}. 
\end{split}
\end{equation}
The rates of information transfer between the end nodes $A$ and $B$ in the two directions are
\begin{equation}
\begin{split}
R_a = F_{a{r_1}}^{9} + F_{a{r_2}}^{10},\;\; 
R_b = F_{b{r_1}}^{9} + F_{b{r_2}}^{10}. 
\end{split}
\end{equation}
In summary, the achievable rate region can be obtained by taking $R_b
= kR_a$ and solving the linear program
$\max_{\lbrace{\mu_i}\rbrace} R_b$ with the rate constraints for
states 9, 10 and 13, flow constraints \eqref{flow1} and
$\sum_{i} \mu_i = 1$, $\mu_i\ge0$ as constraints, for various values of $k$.

\subsection{CF-BC protocol}
The CF-BC protocol uses states 9-12. Only one relay is used in any
state. States 9 and 10 are used the same way as in the CF-CMAC
protocol and have the same rate constraints. However, the computed sum
of messages at each relay is forwarded to the 2 destinations in a time-shared
manner, i.e., relay $R_1$ transmits to $A$ and $B$ using the broadcast
state 11 and relay $R_2$ transmits to $A$ and $B$ using the broadcast
state 12. The rate constraints for states 11 and 12 are:
\begin{equation}\label{state11_1}
F_{{r_1}a}^{11} \leq {\mu_{11}}C(\gamma_{a1}), \; F_{{r_1}b}^{11} \leq {\mu_{11}}C(\gamma_{b1}), \\
\end{equation}
\begin{equation}\label{state12_2}
F_{{r_2}a}^{12} \leq {\mu_{12}}C(\gamma_{a2}), \; F_{{r_2}b}^{12} \leq {\mu_{12}}C(\gamma_{b2}). 
\end{equation}

\subsection{Time-sharing between CF-CMAC and two-way MDF}
The CF-CMAC protocol achieves some rate-pairs that the two-way MDF
protocol cannot achieve. Time-sharing between CF-CMAC and two-way MDF
can be used to achieve all convex combinations of rate-pairs achieved by
the two protocols. Such a protocol would used $8+3=11$ states. 

\section{Two-way Diamond Channel with Direct Source-Destination Link}
In this Section, we consider the diamond network with a direct link between nodes $A$ and $B$ with SNR $\gamma_{ab}=\frac{{h^2}_{ab}P}{N}$. 

\subsection{Outer Bound}
In Section \ref{outer-diamond}, 6 states were considered to
obtain the outer bound. In this network with direct source-destination
link, 10 states need to be considered for obtaining the outer
bound. They are states 1-8, 13 and 14. Note that the $A$-$B$ link
should also be included in states 1-8. In states 13 and 14, the direct
link does not play a role since nodes $A$ and $B$ are
both transmitters or both receivers.

\begin{theorem}
Any achievable rate pair $(R_a, R_b)$ for two-way communication over the diamond channel with direct source-destination link must satisfy the following inequalities for some $\{\mu_i\}$: 
{\allowdisplaybreaks
\begin{align}
R_a \leq &{\mu_{14}}C(\gamma_{a1}+\gamma_{a2}) + {\mu_{4}}C(\gamma_{ab}) + {\mu_2}C(\gamma_{a2}+\gamma_{ab}) \nonumber \\
&+ {\mu_1}C(\gamma_{a1}+\gamma_{ab}) + {\mu_3}C(\gamma_{a1}+\gamma_{a2}+\gamma_{ab}), \nonumber \\
R_a \leq &{\mu_{14}}C(\gamma_{a2}) + {\mu_{13}}C(\gamma_{b1}) + {\mu_4}C((\sqrt{\gamma_{b1}}+\sqrt{\gamma_{ab}})^2) \nonumber \\
&+ {\mu_2}(C(\gamma_{b1})+C(\gamma_{a2}+\gamma_{ab})) + {\mu_1}C(\gamma_{ab}) \nonumber \\
&+ {\mu_3}C(\gamma_{a2}+\gamma_{ab}), \nonumber \\
R_a \leq &{\mu_{14}}C(\gamma_{a1}) + {\mu_{13}}C(\gamma_{b2}) +  {\mu_4}C((\sqrt{\gamma_{b2}}+\sqrt{\gamma_{ab}})^2) \nonumber \\
&+ {\mu_2}C(\gamma_{ab}) +{\mu_1}(C(\gamma_{b2})+C(\gamma_{a1}+\gamma_{ab})) \nonumber \\
&+ {\mu_3}C(\gamma_{a1}+\gamma_{ab}), \nonumber \\
R_a \leq &{\mu_{13}}C((\sqrt{\gamma_{b1}}+\sqrt{\gamma_{b2}})^2) +{\mu_4}C\left(\left(\sqrt{\gamma_{b1}}+\sqrt{\gamma_{b2}}\right.\right. \nonumber \\ &\left.\left.+\sqrt{\gamma_{ab}}\right)^2\right.)+{\mu_2}C((\sqrt{\gamma_{b1}}+\sqrt{\gamma_{ab}})^2)  \nonumber \\ 
&+{\mu_1}C((\sqrt{\gamma_{b2}}+\sqrt{\gamma_{ab}})^2)+ {\mu_3}C(\gamma_{ab}), \nonumber \\
R_b \leq &{\mu_{14}}C(\gamma_{b1}+\gamma_{b2}) + {\mu_{8}}C(\gamma_{ab}) + {\mu_6}C(\gamma_{b2}+\gamma_{ab}) \nonumber \\
 &+ {\mu_5}C(\gamma_{b1}+\gamma_{ab}) + {\mu_{7}}C(\gamma_{b1}+\gamma_{b2}+\gamma_{ab}), \nonumber \\
R_b \leq &{\mu_{14}}C(\gamma_{b2}) + {\mu_{13}}C(\gamma_{a1}) + {\mu_{8}}C((\sqrt{\gamma_{a1}}+\sqrt{\gamma_{ab}})^2) \nonumber \\
&+ {\mu_6}(C(\gamma_{a1})+C(\gamma_{b2}+\gamma_{ab})) + {\mu_5}C(\gamma_{ab}) \nonumber \\
&+ {\mu_{7}}C(\gamma_{b2}+\gamma_{ab}), \nonumber \\
R_b \leq &{\mu_{14}}C(\gamma_{b1}) + {\mu_{13}}C(\gamma_{a2}) +  {\mu_{8}}C((\sqrt{\gamma_{a2}}+\sqrt{\gamma_{ab}})^2) \nonumber \\
&+ {\mu_6}C(\gamma_{ab}) +{\mu_5}(C(\gamma_{a2})+C(\gamma_{b1}+\gamma_{ab})) \nonumber \\
&+ {\mu_{7}}C(\gamma_{b1}+\gamma_{ab}), \nonumber \\
R_b \leq &{\mu_{13}}C((\sqrt{\gamma_{a1}}+\sqrt{\gamma_{a2}})^2) +{\mu_{8}}C\left(\left(\sqrt{\gamma_{a1}}+\sqrt{\gamma_{a2}}\right.\right.\nonumber \\ &\left.\left.+\sqrt{\gamma_{ab}})^2)\right.\right. +{\mu_6}C((\sqrt{\gamma_{a1}}+\sqrt{\gamma_{ab}})^2)  \nonumber \\
&+{\mu_5}C((\sqrt{\gamma_{a2}}+\sqrt{\gamma_{ab}})^2)+ {\mu_{7}}C(\gamma_{ab}), \nonumber \\
\sum_{i} & \mu_i = 1, \mu_i\ge0,
\label{outerbound2}
\end{align}}
where the fraction of time network is in state $i$ is denoted $\mu_i$.
\label{thm2}
\end{theorem}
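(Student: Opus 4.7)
The proof proceeds exactly as in Theorem~\ref{thm1}: apply the half-duplex cutset bound of \cite{khojastepour2003capacity} to the four cuts $\{a\},\{a,r_1\},\{a,r_2\},\{a,r_1,r_2\}$ for $R_a$ and the mirror cuts for $R_b$, then relax each conditional mutual-information term to a Gaussian capacity. Two differences arise relative to Section~\ref{outer-diamond}. First, because of the direct $A$-$B$ edge, the single-hop states 1-8 are no longer dominated by states 13 and 14, so the active-state set grows from $\{1,2,5,6,13,14\}$ to $\{1,\ldots,8,13,14\}$. Second, in any state in which $A$ transmits and $B$ receives (states 1-4), or vice versa (states 5-8), the direct link contributes an extra Gaussian observation that must be carried through the cut-set calculation. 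States 13 and 14 are copied verbatim from Theorem~\ref{thm1}: in those states either both of $A,B$ transmit or both receive, so the direct link does not cross any source-destination cut.

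The state-by-state analysis for $R_a$ then runs as follows. In state~3, $A$ is the sole transmitter and $R_1, R_2, B$ all observe $A$ over independent Gaussian links, so the contribution to each cut is $\mu_3 C(\sum \gamma)$, summed over whichever of $\{\gamma_{a1},\gamma_{a2},\gamma_{ab}\}$ reaches the $S^c$ side. In state~4, $A$ and both relays transmit only to $B$, so the cut $\{a,r_1,r_2\}$ yields the coherent-combining term $\mu_4 C((\sqrt{\gamma_{b1}}+\sqrt{\gamma_{b2}}+\sqrt{\gamma_{ab}})^2)$, while the partial cuts $\{a,r_1\}$, $\{a,r_2\}$, $\{a\}$ give $\mu_4$ times $C((\sqrt{\gamma_{b1}}+\sqrt{\gamma_{ab}})^2)$, $C((\sqrt{\gamma_{b2}}+\sqrt{\gamma_{ab}})^2)$, $C(\gamma_{ab})$ respectively, since conditioning on the $S^c$-side relay's transmission removes that link. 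In states~1 and 2, $A$ and one relay transmit while the other relay and $B$ receive, yielding joint terms of the form $I(X_a,X_1;Y_2,Y_b \mid X_2,X_b)$; a genie-aided decomposition that supplies the relay input to the $B$-decoder splits these into parallel-channel pieces such as $C(\gamma_{a2}+\gamma_{ab})+C(\gamma_{b1})$, in the spirit of \cite{allerton12}. States 5-8 are the $A \leftrightarrow B$-symmetric counterparts and produce the $R_b$ inequalities.

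No new information-theoretic idea beyond Theorem~\ref{thm1} and \cite{allerton12} is required; the main obstacle is the bookkeeping of ten states against four cuts per direction, and in particular selecting, for each (state, cut) pair, the correct genie and the correct coherent-combining versus parallel-channel relaxation so that the resulting expression matches the displayed form of (\ref{outerbound2}). Once every contribution is tabulated and the vanishing terms (states in which the required transmitters are silent or the required receivers are themselves transmitting) are dropped, the eight inequalities in (\ref{outerbound2}) emerge directly, with $\sum_i \mu_i = 1$ following from the normalization of the state time-fractions.
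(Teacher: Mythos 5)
Your proposal follows the same route as the paper: the paper's proof of Theorem~\ref{thm2} simply states that the argument of Theorem~\ref{thm1} is repeated with the same four cuts per direction but with the ten states 1--8, 13, 14, and your state-by-state accounting of the direct-link contributions (including why states 13 and 14 are unchanged and why states 3, 4, 7, 8 are no longer dominated) is consistent with the displayed inequalities in (\ref{outerbound2}). Your write-up is in fact more detailed than the paper's one-line proof, but it is the same argument.
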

\begin{proof}
The proof is similar to the proof for Theorem \ref{thm1}. The same 4 cuts are considered. However, 10 states are considered instead of the 6 states in Theorem \ref{thm1}. 
\end{proof}

\subsection{Achievable Rate Regions}
The CF-CMAC and CF-BC protocols can be used for this network as
well. Since nodes $A$ and $B$ are both transmitters or both receivers
in the states used in the CF-BC and CF-CMAC protocols, the direct link
does not affect the protocol and the achievable rate regions remain
the same as in the case without the direct link. 

The Cooperative Multiple Access Broadcast Channel (CoMABC) protocol
proposed in \cite{tian2012asymmetric} for two way relaying with one
relay makes use of the direct link. This protocol is a three state
protocol. The first state is a MAC where both end nodes $A$
and $B$ transmit to the relay. The second state is a broadcast from
the relay. If the link $A$ to relay is better than the one from $B$,
then $A$ may transmit more bits in the first state and $B$ may receive
at much lower rate in the second state. This is compensated using a
third co-operative state after $A$ finishes decoding in the broadcast
state. In the third state, $A$ and the relay together transmit to
$B$. $A$ may re-transmit some information to help $B$ in decoding, or
may choose to transmit altogether new information. As in the CF-BC
protocol, we now use the CoMABC protocol with either with $R_1$ or
with $R_2$ depending on the rate pair to be achieved. We call this
protocol the 2-relay CoMABC protocol.

Assuming that $\gamma_{a1} \ge \gamma_{b1}$ and $\gamma_{a2} \ge
\gamma_{b2}$, the 2-relay CoMABC protocol uses states 9, 11, 2 and 10,
12, 1. The achievable rate region is the closure of the set of all
points $(R_a,R_b)$ satisfying following constraints:
{\allowdisplaybreaks\begin{align}
R_a &= R_{a1} + R_{a2}, R_{b} = R_{b1} + R_{b2} \mbox{  where} \nonumber \\
R_{a1} &\leq \min\{{{\mu_9}R_{ar1}^* + \mu_2\mathcal{C}(\gamma_{ab})}, \mu_{11}\mathcal{C}(\gamma_{b1})+\mu_2\mathcal{C}(\gamma_{b1}+\gamma_{ab})\}, \nonumber \\
R_{b1} &\leq \min\{{{\mu_9}R_{br1}^*},{\mu_{11}\mathcal{C}(\gamma_{a1})}\},  \nonumber \\
R_{a2} &\leq \min\{{{\mu_{10}}R_{ar2}^* + \mu_{1}\mathcal{C}(\gamma_{ab})}, \mu_{12}\mathcal{C}(\gamma_{b2})+\mu_1\mathcal{C}(\gamma_{b2}+\gamma_{ab})\}, \nonumber \\
R_{b2} &\leq \min\{{{\mu_{10}}R_{br2}^*},{\mu_{12}\mathcal{C}(\gamma_{a2})}\},  \nonumber \\
R_{ari}^* &= \left[{C(\gamma_{ai} - \frac{\gamma_{bi}}{{\gamma_{ai}+\gamma_{bi}}})}\right]^{+},~~ i = 1, 2  \nonumber \\
R_{bri}^* &= \left[{C(\gamma_{bi} - \frac{\gamma_{ai}}{{\gamma_{ai}+\gamma_{bi}}})}\right]^{+},~~ i = 1, 2 
\end{align}}
where $[x]^+\stackrel{\Delta}{=} \max(x,0)$ and the fraction of time
the network is in state $i$ is denoted $\mu_i$. When $\gamma_{b1} \ge
\gamma_{a1}$, state 6 is used instead of state 2 and a similar region
can be obtained. Similarly, when $\gamma_{b2} \ge \gamma_{a2}$, state
5 is used instead of state 1 and a similar region can be obtained.

\section{Two-way Diamond Channel with Interfering Relays}
In this Section, we consider the diamond network with a link between nodes $R_1$ and $R_2$ with SNR $\gamma_{12}=\frac{{h^2}_{12}P}{N}$. 

\subsection{Outer Bound}
In this network with interfering relays, 10 states need to be
considered for obtaining the outer bound. They are states 1, 2, 5, 6,
9-14.
\begin{theorem}
Any achievable rate pair $(R_a, R_b)$ for two-way communication over the diamond channel with interfering relays must satisfy the following inequalities for some $\{\mu_i\}$: 
{\allowdisplaybreaks\begin{align}
R_a \leq &{\mu_{14}}C(\gamma_{a1}+\gamma_{a2}) + {\mu_{10}}C(\gamma_{a2}) + {\mu_2}C(\gamma_{a2}) \nonumber\\
&+ {\mu_9}C(\gamma_{a1}) + {\mu_1}C(\gamma_{a1}), \nonumber\\
R_a \leq &{\mu_{14}}C(\gamma_{a2}) + {\mu_{13}}C(\gamma_{b1}) + {\mu_{10}}C((\sqrt{\gamma_{a2}}+\sqrt{\gamma_{12}})^2) \nonumber\\
&+ {\mu_2}(C(\gamma_{a2})+C(\gamma_{b1}+\gamma_{12})) \nonumber\\
&+ {\mu_{11}}C(\gamma_{b1}+\gamma_{12})+ {\mu_{6}}C(\gamma_{12}), \nonumber\\
R_a \leq &{\mu_{14}}C(\gamma_{a1}) + {\mu_{13}}C(\gamma_{b2}) + {\mu_9}C((\sqrt{\gamma_{a1}}+\sqrt{\gamma_{12}})^2) \nonumber\\
&+ {\mu_1}(C(\gamma_{a1})+C(\gamma_{b2}+\gamma_{12})) \nonumber\\
&+ {\mu_{12}}C(\gamma_{b2}+\gamma_{12})+ {\mu_5}C(\gamma_{12}), \nonumber\\
R_a \leq &{\mu_{13}}C((\sqrt{\gamma_{b1}}+\sqrt{\gamma_{b2}})^2)+{\mu_2}C(\gamma_{b1}) +{\mu_1}C(\gamma_{b2}) \nonumber\\
&+{\mu_{11}}C(\gamma_{b1})+{\mu_{12}}C(\gamma_{b2}), \nonumber\\
R_b \leq &{\mu_{14}}C(\gamma_{b1}+\gamma_{b2}) + {\mu_{10}}C(\gamma_{b2}) + {\mu_9}C(\gamma_{b1}) \nonumber\\
&+ {\mu_5}C(\gamma_{b1}) + {\mu_{6}}C(\gamma_{b2}), \nonumber\\
R_b \leq &{\mu_{14}}C(\gamma_{b2}) + {\mu_{13}}C(\gamma_{a1}) + {\mu_{10}}C((\sqrt{\gamma_{b2}}+\sqrt{\gamma_{12}})^2) \nonumber\\
&+ {\mu_2}(C(\gamma_{12}) + {\mu_{11}}C(\gamma_{a1}+\gamma_{12}) \nonumber\\
&+ {\mu_{6}}(C(\gamma_{b2})+C(\gamma_{a1}+\gamma_{12})), \nonumber\\
R_b \leq &{\mu_{14}}C(\gamma_{b1}) + {\mu_{13}}C(\gamma_{a2}) + {\mu_9}C((\sqrt{\gamma_{b1}}+\sqrt{\gamma_{12}})^2) \nonumber\\
&+ {\mu_1}(C(\gamma_{12}) + {\mu_{5}}(C(\gamma_{b1})+C(\gamma_{a2}+\gamma_{12})) \nonumber\\
&+ {\mu_{12}}C(\gamma_{a2}+\gamma_{12}), \nonumber\\
R_b \leq &{\mu_{13}}C((\sqrt{\gamma_{a1}}+\sqrt{\gamma_{a2}})^2) +{\mu_{11}}C(\gamma_{a1})+{\mu_5}C(\gamma_{a2}) \nonumber\\
&+{\mu_{12}}C(\gamma_{a2})+{\mu_{6}}C(\gamma_{a1}), \nonumber\\
\sum_{i} & \mu_i = 1, \mu_i\ge0,
\label{outerbound3}
\end{align}}
where the fraction of time network is in state $i$ is denoted $\mu_i$.
\label{thm3}
\end{theorem}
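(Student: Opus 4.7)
The plan is to mirror the proof of Theorem~\ref{thm1}: start from the half-duplex cutset bound
$R \le \min_{S}\sum_{i=1}^{M} \mu_i\,I(X^S;Y^{S^c}\mid X^{S^c},i)$
and apply the same four source-side cuts $\{a\}, \{a,r_1\}, \{a,r_2\}, \{a,r_1,r_2\}$ for $R_a$ together with the symmetric four cuts for $R_b$. Two things change relative to Theorem~\ref{thm1}: the set of half-duplex states that survive a dominance argument, and the per-state evaluation of the Gaussian mutual-information bounds, both driven by the new inter-relay link of SNR $\gamma_{12}$.

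First I would redo the dominance pruning that in Theorem~\ref{thm1} kept only states $\{1,2,5,6,13,14\}$. With $\gamma_{12}$ active, states 9 and 10 are no longer dominated by state 14, because a relay on the source side of a cut can now deliver signal to the relay on the destination side through $\gamma_{12}$, a pathway absent in state~14 (where both relays are receivers). By the same token, states 11 and 12 are no longer dominated by state~13. States 3, 4, 7, 8 remain dominated, so precisely the ten states $\{1,2,5,6,9,10,11,12,13,14\}$ in the statement need to be retained.

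Next I would evaluate each (cut, state) cell by upper-bounding the conditional mutual information via its Gaussian maximum-entropy value, conditioning on auxiliary variables as needed to separate signal from interference. Three stereotyped patterns account for all the new $\gamma_{12}$-dependent entries of (\ref{outerbound3}). When a source-side relay and a source-side terminal both beam into a destination-side relay, coherent combining yields a MISO-style bound $C((\sqrt{\gamma}+\sqrt{\gamma_{12}})^2)$; this is the origin of the $\mu_{10}$ and $\mu_9$ terms in the second and third $R_a$-bounds. When a source-side relay alone transmits across the cut and is received both by the opposite relay and by the opposite terminal, conditioning on the interfering terminal transmission reduces the cell to a SIMO channel from the relay and gives an MRC bound such as $C(\gamma_{bi}+\gamma_{12})$; this produces the $\mu_2$, $\mu_1$, $\mu_{11}$, $\mu_{12}$ contributions. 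When only the relay-to-relay hop crosses the cut, a plain $C(\gamma_{12})$ term appears, as with the $\mu_6$ and $\mu_5$ contributions. Every cell in which no relay-relay signal crosses the cut collapses to its counterpart in (\ref{outerbound}).

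The main obstacle is entirely organisational: ten states times four cuts per direction give on the order of forty mutual-information cells, and each requires a careful identification of which nodes transmit, which receive, and what should be conditioned on so that the $\gamma_{12}$ path is neither double-counted nor missed. No new analytic ingredient is needed beyond the Gaussian single-letter bounds of \cite{khojastepour2003capacity,allerton12} already invoked in Theorem~\ref{thm1}. Once the case table is compiled, grouping the terms by state index and matching them against the eight inequalities in (\ref{outerbound3}) finishes the argument.
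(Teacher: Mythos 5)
Your proposal matches the paper's own argument, which is stated in one line as ``similar to Theorem~\ref{thm1}, with the same four cuts but ten states instead of six''; you simply fill in the details (why states 9--12 are no longer dominated once the $\gamma_{12}$ link is active, and how the MISO, MRC, and plain $C(\gamma_{12})$ terms arise cell by cell), and these details are consistent with the entries of (\ref{outerbound3}). No substantive difference in approach.
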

\begin{proof}
The proof is similar to the proof for Theorem \ref{thm1}. The same 4 cuts are considered. However, 10 states are considered instead of the 6 states in Theorem \ref{thm1}.
\end{proof}

\subsection{Achievable Rate Regions}
The CF-CMAC and CF-BC protocols can be used even in this case. The
achievable rate region for these protocols remains the same as in the
case without the $R_1$-$R_2$ link, i.e., the CF-BC and CF-CMAC
protocols do not use the link between the two relays.

Another achievable rate region for this channel using the $R_1$-$R_2$
link can be obtained using States 1, 2, 5 and 6 by time-sharing the
one-way alternating path relaying protocol in \cite{altrel} between
the two flows. We denote this protocol the 2-way AR-DF
protocol. Consider the flow from $A$ to $B$. In State 2, $A$ transmits
to $R_2$ while $R_1$ transmits to $B$. $R_2$ decodes the message from
$A$, and in addition decodes the message from $R_1$ too, considering
it as data to be forwarded to $B$ in the next state. Thus, $R_2$ acts as
a relay for both $A$ and $R_1$. In state 1, $A$ transmits new
information to $R_1$, $R_2$ transmits to $B$. The message transmitted
by $R_2$ will be a combination of the messages received from $A$ and
$R_2$ and the message to be sent to $R_1$. States 5 and 6 act in the
same way for the flow in the opposite direction, $R_b$.

The achievable rate region for the 2-way AR-DF protocol is the closure of
the set of all points $(R_a,R_b)$ satisfying following constraints
\cite{altrel}:
{\allowdisplaybreaks\begin{align}
R_a =&R_{a1} + R_{a2}, R_{b} = R_{b1} + R_{b2} \mbox{  where} \nonumber \\
R_{a1} &\leq {\mu_2}\mathcal{C}({\alpha_1}\gamma_{a2}), \; R_{a2} \leq {\mu_1}\mathcal{C}({\alpha_2}\gamma_{a1}), \nonumber \\
R_{a1} &\leq {\mu_2}\mathcal{C}\left( \frac{\beta_1\gamma_{b1}}{1+(1-\beta_1)\gamma_{b1}}\right)  + {\mu_1}\mathcal{C}((1-\beta_2)\gamma_{b2}), \nonumber \\
R_{a2} &\leq {\mu_1}\mathcal{C}\left( \frac{\beta_2\gamma_{b2}}{1+(1-\beta_2)\gamma_{b2}}\right)  + {\mu_2}\mathcal{C}((1-\beta_1)\gamma_{b1}), \nonumber \\
R_{a1} + R_{a2} &\leq {\mu_2}\mathcal{C}\left(\gamma_{a2}+(1-{\beta_1})\gamma_{12}\right. \nonumber\\
&\left.+2\sqrt{(1-\alpha_1)(1-\beta_1)\gamma_{a2}\gamma_{12}}\right), \nonumber \\
R_{a1} + R_{a2} &\leq {\mu_1}\mathcal{C}\left(\gamma_{a1}+(1-{\beta_2})\gamma_{12}\right. \nonumber\\
&\left.+2\sqrt{(1-\alpha_2)(1-\beta_2)\gamma_{a1}\gamma_{12}}\right), \nonumber\\
R_{b1} &\leq {\mu_5}\mathcal{C}({\alpha_3}\gamma_{b2}), \; R_{b2} \leq {\mu_6}\mathcal{C}({\alpha_4}\gamma_{b1}),\nonumber\\
R_{b1} &\leq {\mu_5}\mathcal{C}\left( \frac{\beta_3\gamma_{a1}}{1+(1-\beta_3)\gamma_{a1}}\right)  + {\mu_6}\mathcal{C}((1-\beta_4)\gamma_{a2}),\nonumber\\
R_{b2} &\leq {\mu_6}\mathcal{C}\left( \frac{\beta_4\gamma_{a2}}{1+(1-\beta_4)\gamma_{a2}}\right) + {\mu_5}\mathcal{C}((1-\beta_3)\gamma_{a1}),\nonumber\\
R_{b1} + R_{b2} &\leq {\mu_5}\mathcal{C}\left(\gamma_{b2}+(1-{\beta_3})\gamma_{12}\right.\nonumber\\
&\left.+2\sqrt{(1-\alpha_3)(1-\beta_3)\gamma_{b2}\gamma_{12}}\right), \nonumber\\
R_{b1} + R_{b2} &\leq {\mu_6}\mathcal{C}\left(\gamma_{b1}+(1-{\beta_4})\gamma_{12}\right.\nonumber\\
&\left.+2\sqrt{(1-\alpha_4)(1-\beta_4)\gamma_{b1}\gamma_{12}}\right),
\end{align}}
where the fraction of time in state $i$ is denoted $\mu_i$.

\section{Numerical Results and Comparisons}
\subsection{Diamond Channel}
In this section, we compare the achievable rate regions of the two-way MDF, CF-BC, and CF-CMAC protocols and also compare with the capacity region outer bound. Numerical results are shown for three different channels: I, II, and III.
\begin{figure}[!ht]
\centering
\includegraphics[width=3in]{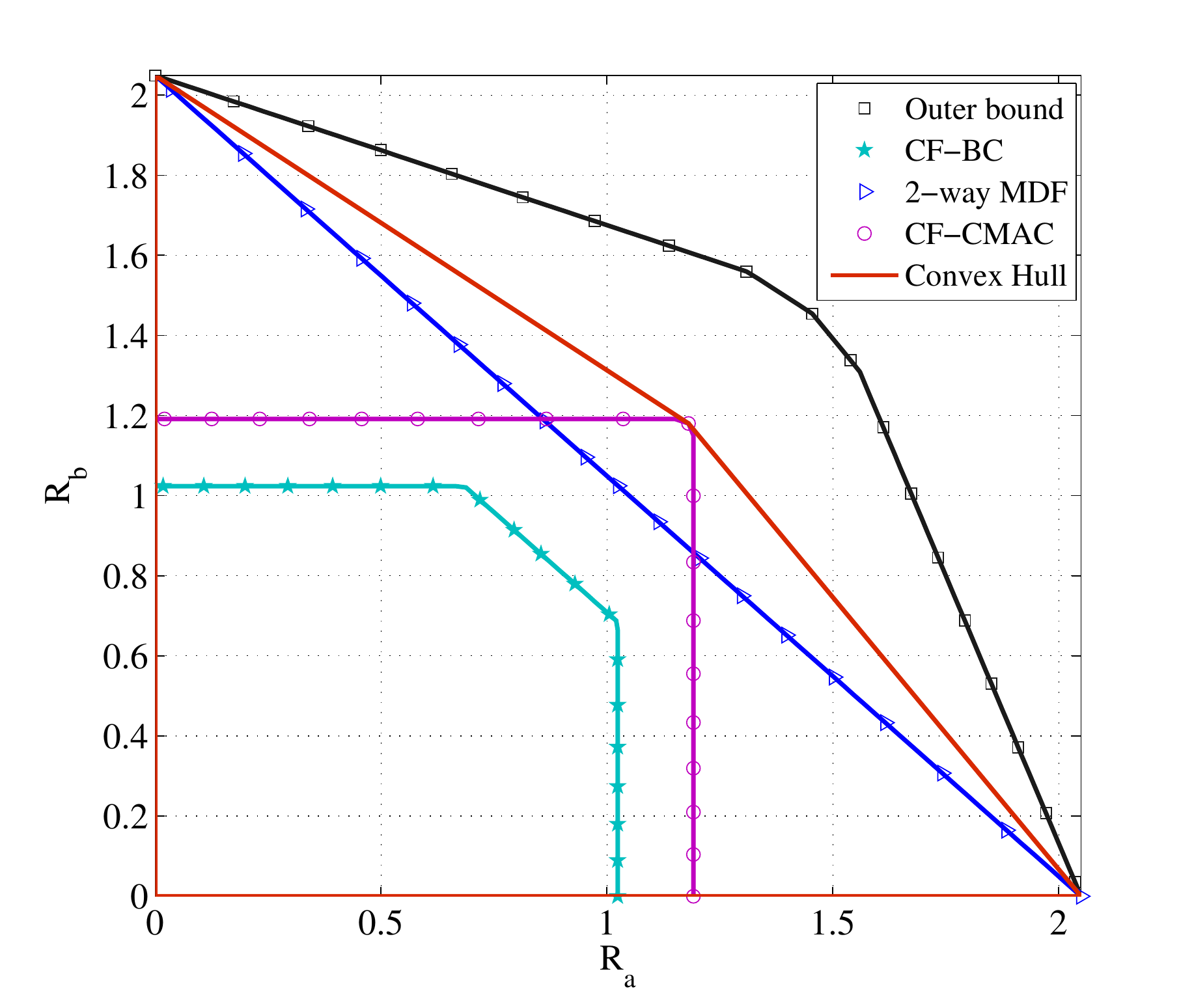}
\caption{Comparison of rate regions -- Channel I: $\gamma_{a1} = 15$ dB, $\gamma_{b1} = 10$ dB, $\gamma_{a2} = 10$ dB, $\gamma_{b2} = 15$ dB.}
\label{figure4}
\end{figure}

Figure \ref{figure4} shows the comparison of rate regions for channel I. It has been shown in \cite{BagMotKha10} that under this channel condition the one-way MDF protocol achieves capacity. Therefore, the two-way MDF region meets the outer bound on the axes where it corresponds to one-way MDF. However, away from the two axes, the two-way MDF has a significant gap from the outerbound. The CF-BC protocol, which uses one relay at a time, is not able to provide any improvement in this scenario. However, the proposed CF-CMAC protocol is able to achieve rate-pairs outside the rate region of the two-way MDF protocol. The convex combination of the CF-CMAC and two-way MDF protocol rate regions is also shown (labelled ``convex hull''). Any point in this convex hull can be achieved by time-sharing the CF-CMAC and two-way MDF protocols.  

\begin{figure}[!ht]
\centering
\includegraphics[width=3in]{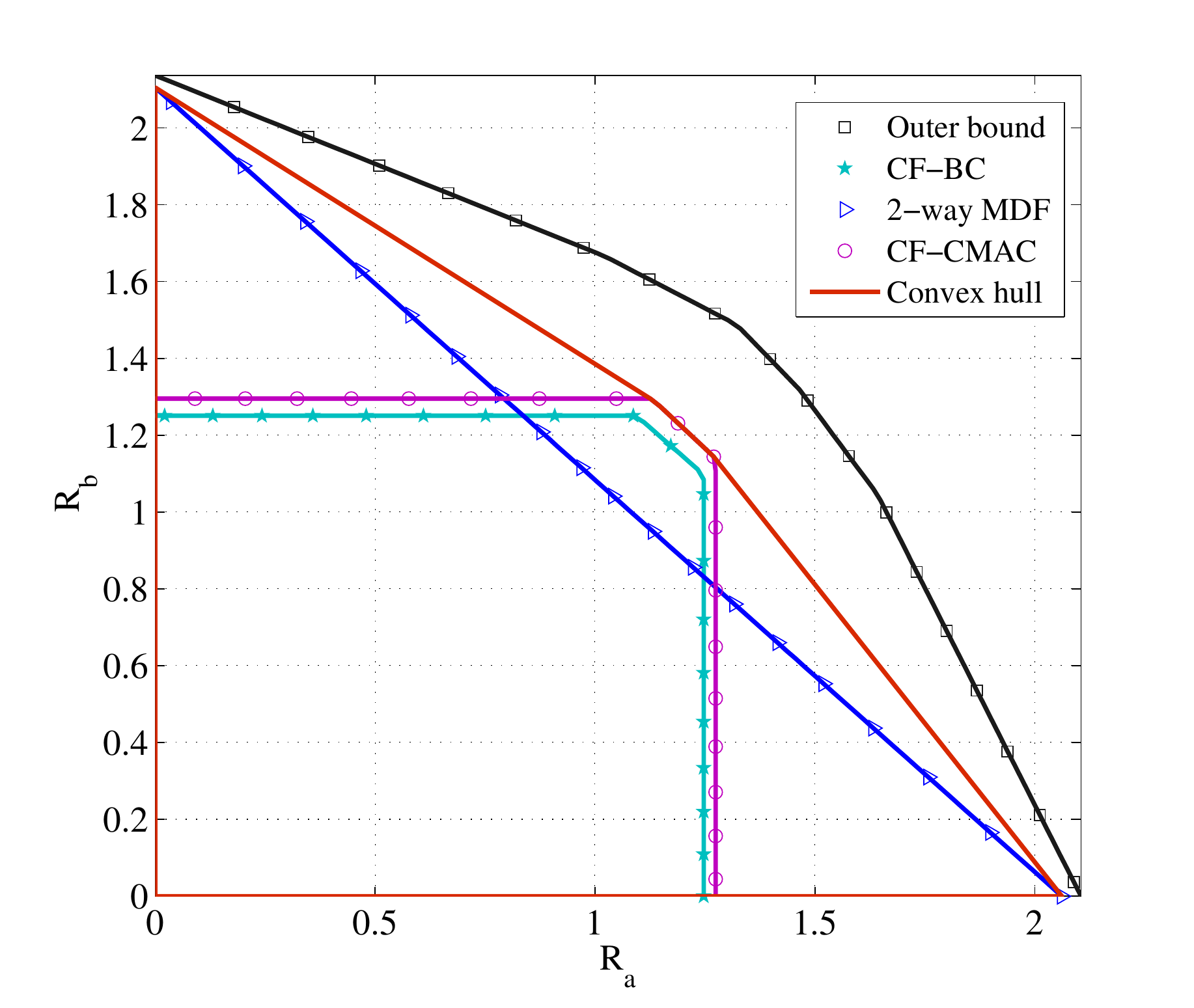}
\caption{Comparison of rate regions -- Channel II: $\gamma_{a1} = 10$ dB, $\gamma_{b1} = 12$ dB, $\gamma_{a2} = 14$ dB, $\gamma_{b2} = 16$ dB.}
\label{figure2}
\end{figure}
Figure \ref{figure2} shows the comparison of rate regions for channel II. In this scenario, both the CF-BC and CF-CMAC protocols achieve rate-pairs outside the two-way MDF rate region. Even CF-BC, which selects one relay and uses compute-and-forward, provides gains in this case. 

Figure \ref{figure5} shows the comparison of rate regions for channel III. In this scenario, the links to relay $R_1$ are significantly better than the links to $R_2$ in terms of SNR. Therefore, this scenario is closer to a one-relay system. Both the CF-BC and CF-CMAC protocols achieve rate-pairs outside the two-way MDF rate region and close to the outer bound. 
\begin{figure}[!ht]
\centering
\includegraphics[width=3in]{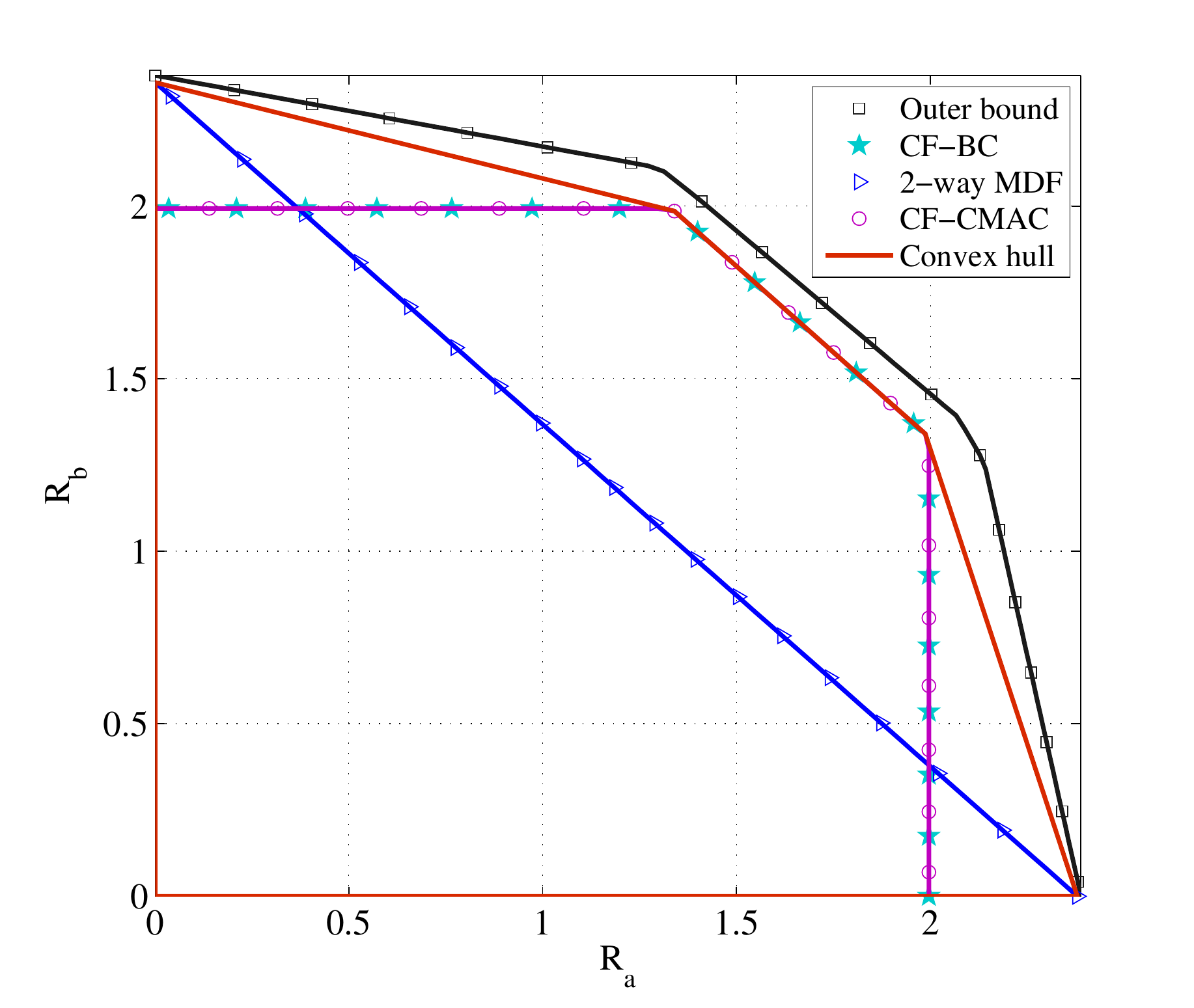}
\caption{Comparison of rate regions -- Channel III: $\gamma_{a1} = 30$ dB, $\gamma_{b1} = 20$ dB, $\gamma_{a2} = 3$ dB, $\gamma_{b2} = 4$ dB.}
\label{figure5}
\end{figure}

In general, the one-way MDF protocol is close to capacity for the one-way diamond channel. Therefore, by time-sharing, one can obtain rates close to capacity near the axes. However, when the desired two-way rates are nearly equal, time-sharing of one-way protocols is far from optimal, and the proposed CF-CMAC and CF-BC protocols achieve much better rates.

\subsection{Diamond Channel with Direct Link}
\begin{figure}[!ht]
\centering
\includegraphics[width=3.2in]{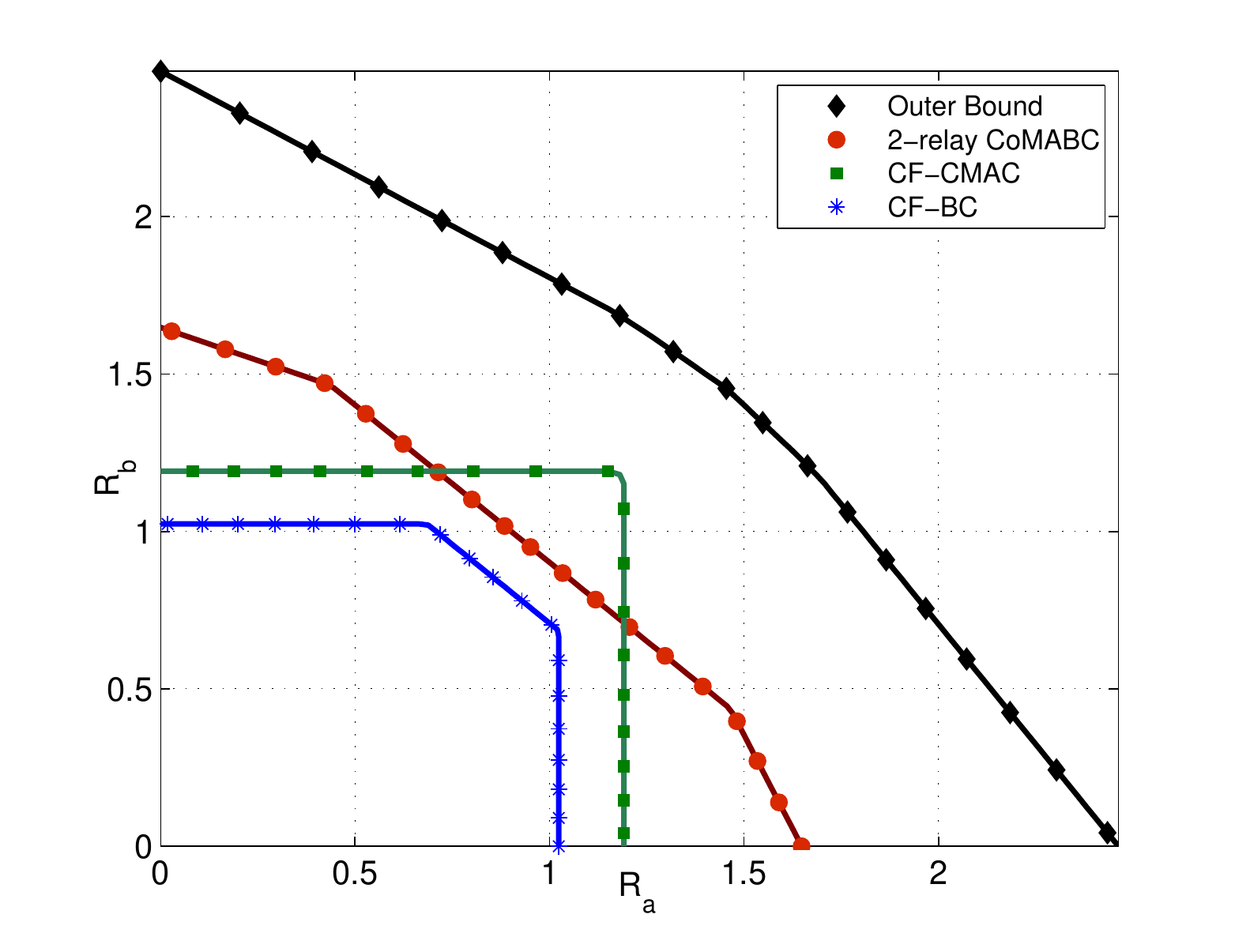}
\caption{Comparison of rate regions: $\gamma_{a1} = 15$ dB, $\gamma_{b1} = 10$ dB, $\gamma_{a2} = 10$ dB, $\gamma_{b2} = 15$ dB, $\gamma_{ab} = 8$ dB.}
\label{ar1}
\end{figure}
Figure \ref{ar1} compares the achievable rate regions of the CF-BC,
2-relay CoMABC protocol, and CF-CMAC protocols with the outer bound.
The 2-relay CoMABC protocol always achieves a larger rate region
than the CF-BC protocol since it uses the direct link as well. The
CF-CMAC can achieve some rate pairs that the 2-relay CoMABC cannot
achieve even without using the direct link. This is because both
relays transmit simultaneously in state 13 of the CF-CMAC
protocol. Mixing of flows using the CF-CMAC protocol performs well for
symmetric rates, while the 2-relay CoMABC protocol using the direct
link performs better for asymmetric rates.

\subsection{Diamond Channel with Interfering Relays}
\begin{figure}[!ht]
\centering
\includegraphics[width=3.2in]{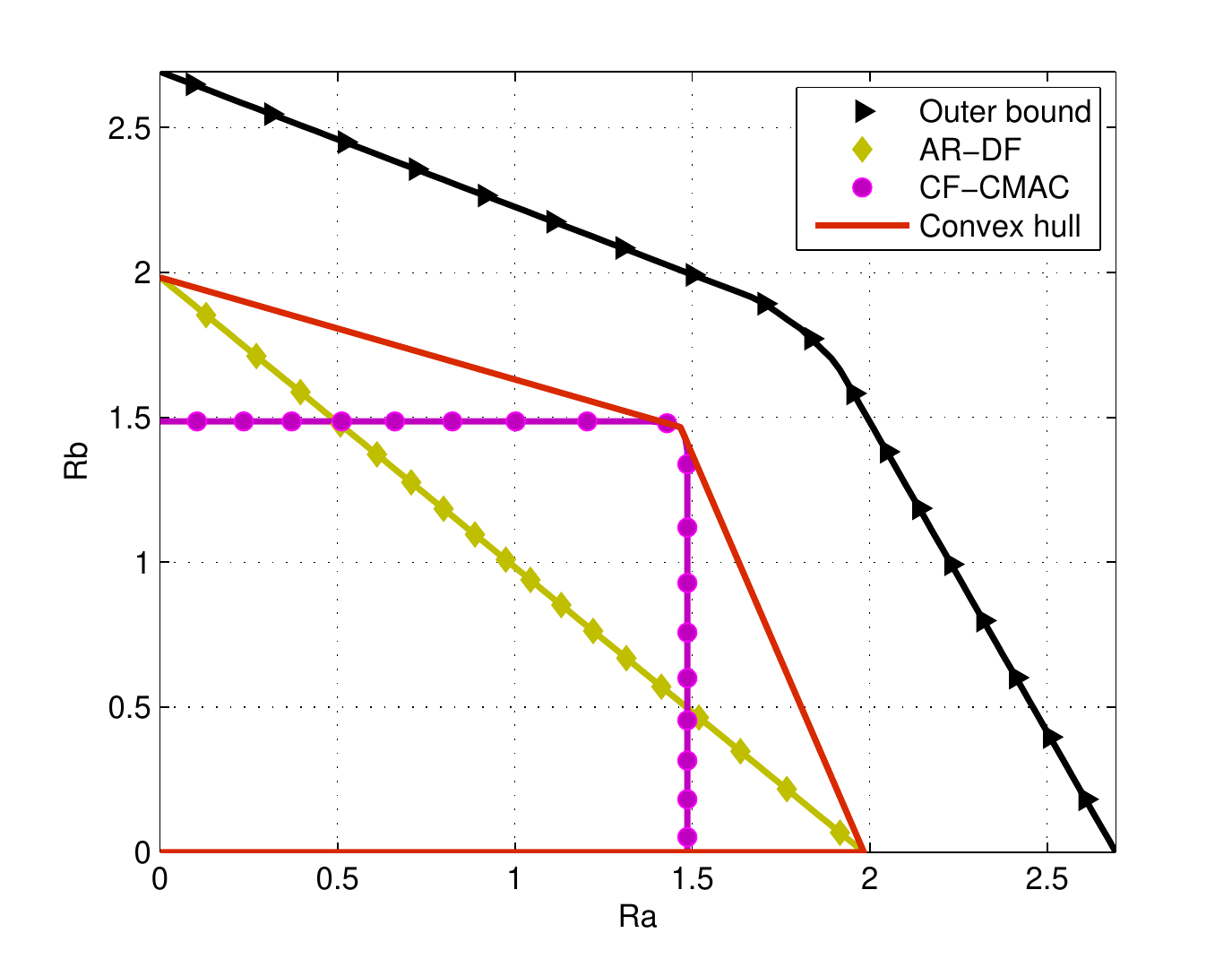}
\caption{Comparison of rate regions: $\gamma_{a1} = 20$ dB, $\gamma_{b1} = 10$ dB, $\gamma_{a2} = 10$ dB, $\gamma_{b2} = 20$ dB, $\gamma_{12} = 20$ dB.}
\label{ar2}
\end{figure}
Figure \ref{ar2} shows the achievable rate regions of the 2-way
AR-DF and CF-CMAC protocols with the outer bound. The convex hull of
the rates achieved by the CF-CMAC and 2-way AR-DF protocols is also
shown. Again, mixing of flows using the CF-CMAC protocol performs well
for symmetric rates, while the 2-way AR-DF protocol performs better
for asymmetric rates. The AR-DF protocol could be further improved
using the other one-way states 3, 4, 7, and 8.

\section{Conclusions}
In this paper, we considered the Gaussian two-way diamond channel. We derived
an outer bound for the capacity region and proposed relaying protocols
based on the compute-and-forward technique. The proposed CF-CMAC
protocol achieves rate-pairs that two-way protocols based on
time-sharing or relay selection cannot achieve. There is still significant 
room for improving the achievable rates or the outer bound in several interesting channel conditions. The 
interference channel state (state 14 in Fig. \ref{states}) will probably 
play a vital role in closing the gap further.

We also considered the diamond channel with direct source-destination
link and the diamond channel with interfering relays. We derived outer
bounds for the capacity region in each case. We developed the 2-relay
CoMABC protocol for the diamond channel with direct link and compared
the achievable rate region with the CF-BC and CF-CMAC protocols and
the outer bound. We also compared the 2-way AR-DF protocol with the
CF-CMAC protocol and the outer bound for the diamond channel with
interfering relays. As in the diamond channel, mixing the flows using
compute and forward strategies is found to be important for symmetric
rates.

\bibliographystyle{IEEEtran} 
\bibliography{IEEEabrv,master}

\end{document}